\newtheorem{theorem}{Theorem}
\newtheorem{lemma}{Lemma}
\newtheorem{corollary}{Corollary}
\begin{document}

\title{\bf Optimal Cuts and Partitions in Tree Metrics in Polynomial Time}

\author{
  Marek Karpinski\thanks{Research partially supported by DFG grants and the Hausdorff grant~EXC59-1. Department of Computer Science, University of Bonn. Email:~{marek@cs.uni-bonn.de}}
  \and
  Andrzej Lingas\thanks{Research partially supported by VR grant 621-2008-4649. Department of Computer Science, Lund University. Email:~{andrzej.lingas@cs.lth.se}}
  \and
  Dzmitry Sledneu\thanks{Centre for Mathematical Sciences, Lund University. Email:~{dzmitry.sledneu@math.lu.se}.}
}

\date{}

\maketitle

\begin{abstract}
We present a polynomial time dynamic programming algorithm for optimal
partitions in the shortest path metric induced by a tree.  This resolves,
among other things, the exact complexity status of the optimal partition problems
in one dimensional geometric metric settings. Our method of solution could be also
of independent interest in other applications. We discuss also an
extension of our method to the class of metrics induced by the bounded treewidth graphs.
\end{abstract}


\section{Introduction}

The optimal partition problems
for unweighted and weighted graphs are classical NP-hard combinatorial
optimization problems.

The typical partition problems, like  MAX-CUT and MAX-BISECTION are well known to be APX-hard~\cite{Ho95}.
Also the existence  of a PTAS for MIN-BISECTION has been likely ruled out
in~\cite{K04}. The metric counterparts of these problems are to find the
corresponding optimal partitions of the complete graph on the input point set,
where edges are weighted by metric distances between their endpoints. The metric
MAX-CUT, MAX-BISECTION, MIN-BISECTION and other partitioning problems were all
proved to have \emph{polynomial time approximation
schemes}~(PTAS)~\cite{AFKK03,DK98,K02,FK98,FKK04,FKKV05,R10}. These problems are
known to be NP-hard in general metric setting (e.g., for $1-2$
metrics~\cite{DK98}). Their exact computational status for geometric (i.e., $L_p$) metrics and this
even for dimension one was widely open.

In this paper, in particular, we resolve the complexity status of these problems for just
dimension one by giving a polynomial time algorithm. Our solution, somewhat
surprisingly, involves certain new ideas for applying dynamic programming which
could be also of independent interest (see~\cite{KLS12} for a preliminary
version). In fact, our dynamic programming method works for the more general
case of tree metric spaces, where the underlying trees have nonnegative real edge
weights (see also the embeddability properties of arbitrary metrics in tree metrics~\cite{F03}).
Observe that the one dimensional case can be modeled by line graphs
with nonnegative real edge weights. We also give an evidence that our
polynomial time method can be extended to include analogous
partition problems in metric spaces induced by shortest paths in graphs of
constant treewidth.

\section{Preliminaries}

We  define our dynamic programming method in terms of generalized
subproblems on finite multisets of vertices in an undirected graph with
nonnegative real edge weights.

For a partition of a finite multiset $P$ of graph vertices into two multisets
$P_1$ and $P_2$, the \emph{value of the cut} is the sum over all pairs $(v,w)\in
P_1\times P_2$ of length (i.e., total weight) of a shortest path connecting $v$
with $w$. For example, see Fig. \ref{pic1a}.

The MAX-CUT problem for $P$ will be now to find a partition of $P$ into two
multisets that maximizes the value of the cut. If $|P|=n$ and the two multisets
are additionally required to be of cardinality $k$ and $n-k$, respectively, then
we obtain the $(k,n-k)$ MAX-PARTITION problem for $P$. In particular, if $n$ is
even and $k=n/2$ then we have the MAX-BISECTION problem. Next, if we replace the
requirement of maximization with that of minimization then we obtain the
$(k,n-k)$ MIN-PARTITION and MIN-BISECTION problems for $P$, respectively.

The aforementioned optimal partition problems can be generalized to include a
multiset of points in an arbitrary metric (in particular, geometric) space by
just replacing the weight of a shortest path connecting $v$ with $w$ with the
distance between $v$ and $w$. 

\begin{figure}[h!]
\centering
\begin{subfigure}{0.25\textwidth}
\includegraphics[width=\textwidth]{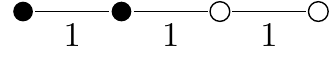}
\caption{}
\label{pic1a}
\end{subfigure}
\hspace{2cm}
\begin{subfigure}{0.25\textwidth}
\includegraphics[width=\textwidth]{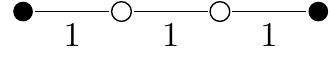}
\caption{}
\label{pic1b}
\end{subfigure}
\caption{The cut values of the first partition (a) and the second partition (b)
are respectively $8$ and $6.$ It follows that single geometric
cuts are not always sufficient to generate minimum bisections
on the real line.}
\end{figure}

Given a graph $G = (V, E)$, a \emph{tree decomposition} is a pair $(X, T)$,
where $X = \{X_1, \dots, X_l\}$ is a family of subsets (called bags) of $V$, and
$T$ is a tree whose nodes are the bags $X_i$, satisfying the following
properties~\cite[section 12.3]{D05}:
\begin{enumerate}
\item
The union of all bags $X_i$ equals $V$. 
\item For each $(v, w)\in E$, there is a bag $X_i$ that
  contains both $v$ and $w$. 
\item If $X_i$ and $X_j$ both contain a vertex $v$, then all nodes $X_k$ of the
  tree in the (unique) path between $X_i$ and $X_j$ contain $v$ as
  well. 
\end{enumerate}

The \emph{width} of a tree decomposition is the size of its largest set $X_i$
minus one. The treewidth $tw(G)$ of a graph $G$ is the minimum width among all
possible tree decompositions of $G$.

\section{The Algorithm}

Let $T$ be a tree with at most $n$ vertices and nonnegative real edge weights.
Next, let $P$ be a multiset of vertices of $T$ whose cardinality does not exceed
$n$.

Let us root $T$ at some vertex $r$. Next, for each vertex $v$ of $T$, let $T_v$
stands for the subtree of $T$ induced by all vertices of $T$ from which the path
to $r$ passes through $v$. We shall assume that $T_v$ is rooted at $v$. Finally,
let $P_v$ be the set of all elements in $P$ that are copies of vertices in
$T_v$, and let $P(v)$ stand for the set of all elements in $P$ that are copies
of the vertex $v$.

Consider a subtree $T_v$ of $T$ and a partition of $P$ into two multisets $A$
and $B$. The crucial observation is as follows.

In order to compute the total weight of shortest paths or their fragments
contained within $T_v$, it is sufficient to know the partition of $P_v$ into
$P_v\cap A$ and $P_v\cap B$, and just the number of elements in $P\setminus P_v$
that belong to $A$ or $B$.

Indeed, the specific placement of the elements of $(P\setminus P_v)\cap A$ or
$(P\setminus P_v)\cap B$ in the tree $T$ is not relevant when we consider solely
the maximal fragments of shortest paths connecting to these elements that are
contained within $T_v$.
 
The subproblem $MAXCUT(v,p,q,s,t)$ consists in finding a partition of $P_v$ into
two multisets $A$ and $B$ of cardinality $p$ and $q$, respectively, that
maximizes the sum of:

\begin{enumerate}
\item the total weight of shortest paths between
pairs of elements in $A\times B$;
\item $t$ times the total weight of the paths
between pairs of elements in $A\times \{ v\}$;
\item $s$ times the total weight of the paths
between pairs of elements in $B\times \{ v\}$.
\end{enumerate}

The parameters $s$ and $t$ are interpreted as the number
of elements in $P\setminus P_v$ that belong to the same
set as those in $A$ or $B,$ respectively, in the sought
two partition of $P.$

We shall denote the maximum possible value of the sum by $mc(v,p,q,s,t)$. Note
that the total number of subproblems is $O(n^3)$.

Assume first that $T$ is binary.

If $T_v$ is the singleton $(\{v\},\emptyset )$ then $MAXCUT(v,p,q,s,t)$ can be
trivially solved in $O(1)$ time. Otherwise, we can reduce the subproblem to
smaller ones by the recurrences given in the following lemmata.

\begin{lemma}\label{lem: twochild}
Suppose that $v$ has two children $v_1$ and $v_2$. The value $mc(v,p,q,s,t)$ is
equal to the maximum over partitions of $|P(v)|$ into the sum of natural numbers
$p(v),\ q(v)$, partitions of $|P_{v_1}|$ into the sum of natural numbers $p_1$
and $q_1$, and partitions of $|P_{v_2}|$ into the sum of natural numbers $q_1,\
q_2 $, where $p=p(v)+p_1+p_2$ and $q=q(v)+q_1+q_2$, of the total of:
\begin{enumerate}
\item $mc(v_1,p_1,q_1, s+p_2+p(v), t+q_2+q(v))$;
\item $mc(v_2,p_2,q_2, s+p_1+p(v), t+q_1+q(v))$;
\item $(p_1q_2+p_2q_1)(weight(v_1,v)+weight(v_2,v))$;
\item $p_1t\times weight(v_1,v)+ p_2t\times weight(v_2,v)$;
\item $q_1s \times weight(v_1,v)+q_2s\times weight(v_2,v)$.
\end{enumerate}
\end{lemma}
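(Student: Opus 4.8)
The plan is to establish the recurrence as an exact \emph{identity} for each fixed way of distributing the elements, and then to close with a decoupling argument that lets the induced child objectives be replaced by their optima $mc(v_1,\cdot)$ and $mc(v_2,\cdot)$. I would begin by recording the disjoint decomposition $P_v=P(v)\sqcup P_{v_1}\sqcup P_{v_2}$, so that any partition of $P_v$ into $A,B$ of sizes $p,q$ is specified exactly by how each of the three parts is split, i.e.\ by numbers $p(v),q(v),p_1,q_1,p_2,q_2$ subject to $p=p(v)+p_1+p_2$, $q=q(v)+q_1+q_2$ and the part-size constraints. Writing $A_i=A\cap P_{v_i}$, $B_i=B\cap P_{v_i}$ and $A(v),B(v)$ for the copies of $v$, the induced partition of each $P_{v_i}$ is precisely the object on which the child subproblem is defined.

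The analytic core is two elementary distance facts that hold because $T$ is a tree rooted at $r$: for $x\in P_{v_i}$ the shortest $x$--$v$ path runs through $v_i$, so $dist(x,v)=dist(x,v_i)+weight(v_i,v)$; and for a cross pair $x\in P_{v_1}$, $y\in P_{v_2}$ the shortest path runs through $v$, so $dist(x,y)=dist(x,v_1)+weight(v_1,v)+weight(v_2,v)+dist(y,v_2)$. Using these I would expand the three terms of the objective of $MAXCUT(v,p,q,s,t)$, splitting $\sum_{A\times B}dist$ into the blocks determined by which of $P(v),P_{v_1},P_{v_2}$ each endpoint lies in. The same-subtree blocks $A_1\times B_1$ and $A_2\times B_2$ reproduce the internal cut values, while the sums $t\sum_A dist(\cdot,v)$ and $s\sum_B dist(\cdot,v)$ split across the subtrees. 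The key bookkeeping point is that the external counts fed into the children must be \emph{augmented}: from the viewpoint of $T_{v_1}$ every element of $A$ lying outside $T_{v_1}$ (those in $A_2$ and in $A(v)$) behaves like an external $A$-element reached through $v_1$, and symmetrically for $B$. This is exactly what produces the parameters $s+p_2+p(v),\,t+q_2+q(v)$ at $v_1$ and $s+p_1+p(v),\,t+q_1+q(v)$ at $v_2$, so that collecting the $dist(\cdot,v_i)$-weighted sums reassembles the objectives of subproblems~1 and~2.

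After this collection the only surviving quantities carry a factor $weight(v_1,v)$ or $weight(v_2,v)$: they are the two boundary edges, counted once for each pair whose connecting path crosses them. Grouping by pair type should yield the cross-subtree contribution $(p_1q_2+p_2q_1)(weight(v_1,v)+weight(v_2,v))$ of term~3, and the contributions $t(p_1\,weight(v_1,v)+p_2\,weight(v_2,v))$ and $s(q_1\,weight(v_1,v)+q_2\,weight(v_2,v))$ of terms~4 and~5. I expect this boundary accounting to be the main obstacle: one must verify that \emph{every} pair in $A\times B$ whose path uses edge $(v_i,v)$ is charged that edge exactly once and no pair is charged it spuriously. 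The most delicate case is a pair with one endpoint among the copies $P(v)$ of $v$ itself, since such an endpoint sits at distance zero from $v$ yet still forces its partner in $T_{v_i}$ to traverse the boundary edge; this is exactly where a missing boundary term is most likely to creep in, so I would pin the calculation down against a small explicit instance (e.g.\ a root with two leaf children and one copy at each of the three vertices) before trusting the collected form.

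Finally I would close with the decoupling observation. For a fixed tuple $(p(v),q(v),p_1,q_1,p_2,q_2)$ the identity expresses the objective as the objective of the induced $v_1$-partition plus that of the induced $v_2$-partition plus the boundary terms, and the boundary terms depend on the tuple only through the cardinalities. Hence, for each fixed tuple, the total is maximized by choosing optimal partitions of $P_{v_1}$ and $P_{v_2}$ \emph{independently}, contributing exactly $mc(v_1,p_1,q_1,s+p_2+p(v),t+q_2+q(v))$ and $mc(v_2,p_2,q_2,s+p_1+p(v),t+q_1+q(v))$. Taking the maximum over all admissible tuples on both sides then yields the claimed equality for $mc(v,p,q,s,t)$, establishing both inequalities at once.
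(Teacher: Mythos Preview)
Your approach is the same as the paper's: both are straight edge-by-edge accounting, with the paper's version stated more informally (it just names what each of the five terms is meant to represent). Your write-up is considerably more careful, and in particular your instinct to single out the pairs with one endpoint in $P(v)$ is exactly right.

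In fact that instinct points to a genuine discrepancy, so your sentence ``grouping by pair type should yield the cross-subtree contribution \ldots\ of term~3, and the contributions \ldots\ of terms~4 and~5'' will not go through as written. If you run your own sanity check (root $v$ with leaf children $v_1,v_2$, unit edge weights, one copy at each of $v,v_1,v_2$, partition $A=\{v_1\}$, $B=\{v,v_2\}$, $s=t=0$), the true objective is $dist(v_1,v)+dist(v_1,v_2)=3$, while the five listed terms sum to $2$. What is missing is precisely the boundary-edge charge for pairs joining a copy of $v$ to an element of a subtree, namely
\[
\bigl(p(v)\,q_1+q(v)\,p_1\bigr)\,weight(v_1,v)\;+\;\bigl(p(v)\,q_2+q(v)\,p_2\bigr)\,weight(v_2,v).
\]
Equivalently, terms~4 and~5 should carry $t+q(v)$ and $s+p(v)$ in place of $t$ and $s$. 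Your decomposition and decoupling argument are correct and would produce this corrected identity; the defect is in the lemma's displayed terms, not in your method. (Note the issue disappears whenever $P(v)=\emptyset$, e.g.\ if all input points are placed at leaves, so the algorithm and its running time are unaffected.)
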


\begin{proof}
The first and second component values correspond to the total weights of
shortest paths or their fragments within $T_{v_i}$ connecting pairs of elements
in $P_{v_i}\times P_{v_i}$ or in $P_{v_i}\times (P\setminus P_{v_i})$, $i\in
\{1,2\},$ that belong to different sets in the sought two partition of $P$ (see
Fig.~\ref{pic2a}).

The three remaining ones correspond to the total weight of the edges $\{
v,v_1\}$ and $\{v,v_2\}$ on shortest paths between copies of vertices in $P$
belonging to different sets in the sought two partition. In particular, the
third component value corresponds to the total weight of the $\{ v ,v_1\}$ and
$\{v,v_2\}$ fragments of shortest paths connecting pairs $\{ a,b\}$ of elements
in $P$, where $a$ is a copy of a vertex in $T_{v_1}$ while $b$ is a copy of a
vertex in $T_{v_2}$ and $a,\ b$ belong to different sets of the two partition
(see Fig.~\ref{pic2b}).

Next, the fourth component value corresponds to the total weight of the $\{
v,v_1\}$ and $\{v,v_2\}$ fragments of shortest paths connecting pairs $\{ a,b\}$
of elements in $P$, where $a$ is a copy of a vertex in $T_{v_1}$ or $T_{v_2}$
belonging to the first set in the sought partition, while $b$ is a copy of a
vertex in $P\setminus P_v$ belonging to the second set in the partition. The
fifth component value can be specified symmetrically by switching the first set
with the second set in the two partition. For $p_1 t \times weight(v_1, v)$ term
see Fig.~\ref{pic2c}, the other terms are symmetrical.
\end{proof}

\begin{figure}[h!]
\begin{subfigure}{0.25\textwidth}
\includegraphics[width=\textwidth]{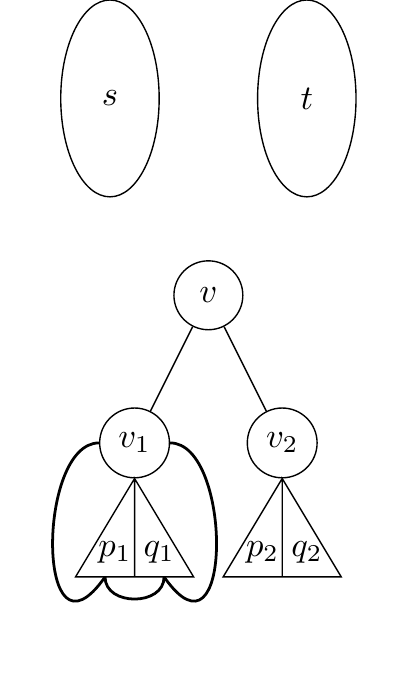}
\caption{}
\label{pic2a}
\end{subfigure}
\hfill
\begin{subfigure}{0.25\textwidth}
\includegraphics[width=\textwidth]{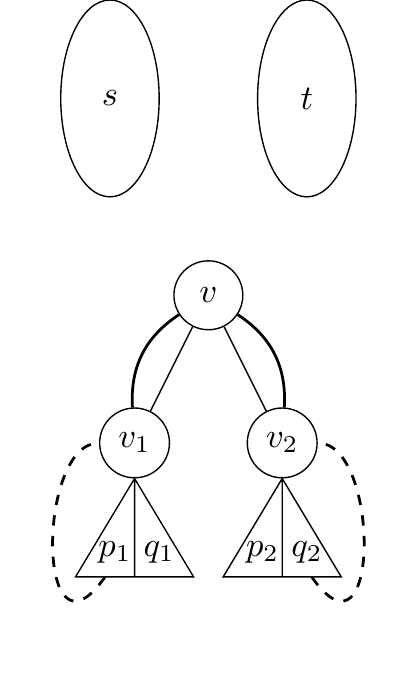}
\caption{}
\label{pic2b}
\end{subfigure}
\hfill
\begin{subfigure}{0.25\textwidth}
\includegraphics[width=\textwidth]{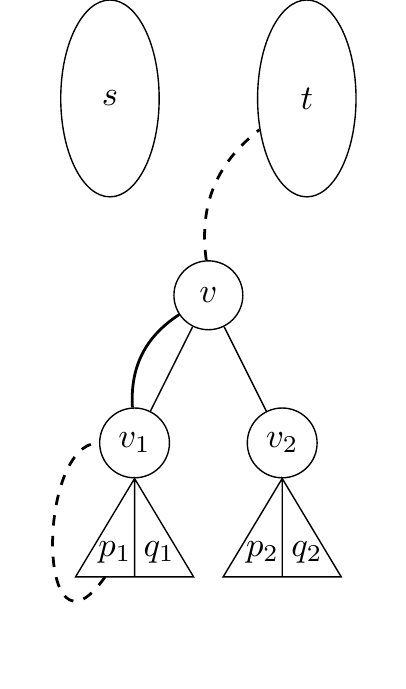}
\caption{}
\label{pic2c}
\end{subfigure}
\caption{Distinct cases in the proof of Lemma~\ref{lem: twochild}}
\end{figure}

\begin{lemma}\label{lem: onechild}
Suppose that $v$ has a single child $v_1$. The value $mc(v,p,q,s,t)$ is
equal to the maximum over partitions of $|P(v)|$ into the sum of natural numbers
$p(v),\ q(v)$, of the total of:
\begin{enumerate}
\item $mc(v_1,p-p(v),q-q(v), s+p(v), t+q(v))$;
\item $(p-p(v))t\times weight(v_1,v)$;
\item $(q-q(v))s\times weight(v_1,v)$.
\end{enumerate}
\end{lemma}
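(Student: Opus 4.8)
The plan is to follow the template of Lemma~\ref{lem: twochild}, but now across the single edge $\{v,v_1\}$. Since $v$ has the unique child $v_1$, the subtree $T_v$ consists of $v$ together with $T_{v_1}$ joined by this edge, so $P_v$ is the disjoint union of $P(v)$ and $P_{v_1}$. I would first record the structural bijection: every partition of $P_v$ into $A$ of size $p$ and $B$ of size $q$ arises uniquely by choosing how many copies of $v$ go to each side --- say $p(v)$ to $A$ and $q(v)=|P(v)|-p(v)$ to $B$ --- and then partitioning $P_{v_1}$ into $A_1=A\cap P_{v_1}$ of size $p-p(v)$ and $B_1=B\cap P_{v_1}$ of size $q-q(v)$. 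Thus computing $mc(v,p,q,s,t)$ amounts to maximizing, over the choice of $p(v)$ and over partitions of $P_{v_1}$, the value of the $MAXCUT(v,p,q,s,t)$ objective on the induced partition.

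Writing $d(x,y)$ for the shortest-path length between $x$ and $y$, the decisive identity is that any path from a copy of a vertex of $T_{v_1}$ to $v$ (or to anything at or above $v$) first reaches $v_1$ and then crosses the edge, so $d(a,v)=d(a,v_1)+weight(v_1,v)$ for every $a\in P_{v_1}$, whereas $d(a,v)=0$ for every copy $a$ of $v$. Substituting this into the three components of the level-$v$ objective splits each relevant weight into a part living strictly inside $T_{v_1}$ and a part that is exactly one crossing of $\{v,v_1\}$. The key claim, proved by matching terms, is that the strictly-inside parts reassemble precisely into the objective of $MAXCUT(v_1,\,p-p(v),\,q-q(v),\,s+p(v),\,t+q(v))$ on $(A_1,B_1)$: from the viewpoint of $T_{v_1}$ the $p(v)$ copies of $v$ in $A$ and the $q(v)$ in $B$ are just additional external elements on the $A$-side and $B$-side, so $s$ and $t$ increase by $p(v)$ and $q(v)$ respectively. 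Since the remaining edge-crossing terms depend only on the cardinalities $p-p(v)$ and $q-q(v)$ and not on the particular partition of $P_{v_1}$, they factor out of the inner maximization, so the optimal partition of $P_{v_1}$ for a fixed $p(v)$ is the one delivered by $mc(v_1,\cdot)$; taking the outer maximum over $p(v)$ then yields the asserted recurrence.

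The step I expect to be the main obstacle, and would check most carefully, is the exact count of the edge-crossing terms --- the number of opposite-side pairs whose shortest path traverses $\{v,v_1\}$, times $weight(v_1,v)$. Such a crossing pair is of one of two kinds: (i) an element of $P_{v_1}$ opposite an element of $P\setminus P_v$, and (ii) an element of $P_{v_1}$ opposite a copy of $v$, since the copies of $v$ also sit above the edge. Kind~(i) contributes $(p-p(v))\,t\,weight(v_1,v)$ and $(q-q(v))\,s\,weight(v_1,v)$, matching components~2 and~3 as stated. Kind~(ii), however, contributes a further $(p-p(v))\,q(v)\,weight(v_1,v)+(q-q(v))\,p(v)\,weight(v_1,v)$, because these are genuine $A\times B$ distances that the $v_1$-subproblem accounts for only as far as $v_1$. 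I therefore expect the coefficients in components~2 and~3 to be $t+q(v)$ and $s+p(v)$ rather than $t$ and $s$; as a sanity check I would evaluate the two-vertex instance $v-v_1$ with $P=\{v,v_1\}$, where $mc(v,1,1,0,0)=weight(v_1,v)$, which the corrected coefficients reproduce while the bare $t,s$ give $0$.
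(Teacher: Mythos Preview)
Your argument follows exactly the same decomposition as the paper's proof: split $P_v$ into $P(v)$ and $P_{v_1}$, push the $p(v)$ and $q(v)$ copies of $v$ into the external counts for the $v_1$-subproblem, and account separately for the traversals of the edge $\{v,v_1\}$. The paper's proof says nothing beyond this; it simply labels the three listed components as, respectively, the weight contained inside $T_{v_1}$, the $\{v,v_1\}$-fragments for pairs $(a,b)$ with $a\in T_{v_1}$ on the first side and $b\in P\setminus P_v$ on the second side, and the symmetric term.

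Your careful term-matching is right, and so is your conclusion: the recurrence as stated omits the $\{v,v_1\}$-crossings coming from pairs where one endpoint is a copy of $v$ itself. The paper's own proof explicitly restricts components~2 and~3 to pairs with $b\in P\setminus P_v$, so the pairs in $A_1\times(\text{copies of }v\text{ in }B)$ and $B_1\times(\text{copies of }v\text{ in }A)$ are never charged for the edge $\{v,v_1\}$, even though they contribute to item~1 of the $MAXCUT(v,\cdot)$ objective and the $v_1$-subproblem only carries them as far as $v_1$. Your two-vertex sanity check confirms this: with $P=\{v,v_1\}$ and $s=t=0$ the stated formula evaluates to $0$ while the correct cut value is $weight(v_1,v)$. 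The fix you propose, replacing $t$ and $s$ by $t+q(v)$ and $s+p(v)$ in components~2 and~3, is exactly what the term count yields; the same correction is needed in Lemma~\ref{lem: twochild}, where the crossings between $P_{v_i}$ and the copies of $v$ are likewise absent from components~3--5. None of this affects the algorithm's running time or the theorems that follow.
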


\begin{proof}
The first component value corresponds to the total weight of shortest paths or
their fragments within $T_{v_i}$ connecting pairs of elements in $P_{v_1}\times
P_{v_1}$ or in $P_{v_1}\times (P\setminus P_{v_1})$ that belong to different
sets in the sought two partition of $P$ (see Fig.~\ref{pic3a}). The second
component value corresponds to the total weight of the $\{ v,v_1\}$ fragments of
shortest paths connecting pairs $\{ a,b\}$ of elements in $P$, where $a$ is a
copy of a vertex in $T_{v_1}$ belonging to the first set in the sought
partition, while $b$ is a copy of a vertex in $P\setminus P_v$ belonging to the
second set in the partition (see Fig.~\ref{pic3b}). Finally, the third
component value can be specified symmetrically by switching the first set with
the second set in the two partition.
\end{proof}

\begin{figure}[h!]
\centering
\begin{subfigure}{0.25\textwidth}
\includegraphics[width=\textwidth]{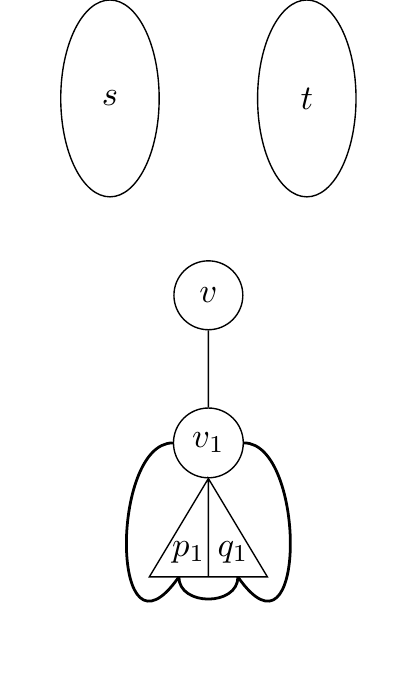}
\caption{}
\label{pic3a}
\end{subfigure}
\hspace{2cm}
\begin{subfigure}{0.25\textwidth}
\includegraphics[width=\textwidth]{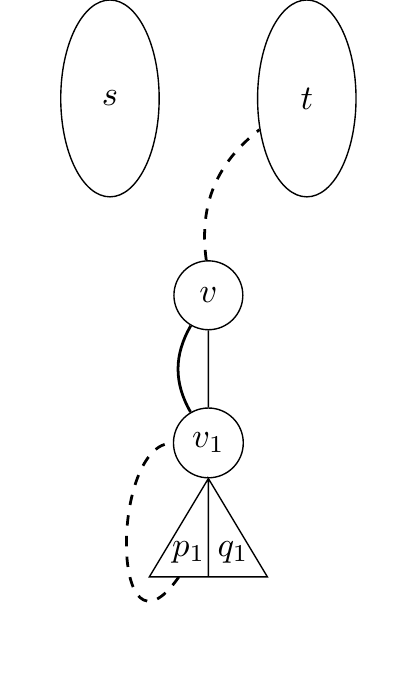}
\caption{}
\label{pic3b}
\end{subfigure}
\caption{Two cases in the proof of Lemma~\ref{lem: onechild}}
\end{figure}

In Lemma~\ref{lem: twochild}, after picking a partition of $|P(v)|$ and
$|P_{v_1}|$, the partition of $|P_{v_2}|$ is determined by the constraints
$p=p(v)+p_1+p_2$ and $q=q(v)+q_1+q_2$. It follows that the number of candidates
for partition sequences in Lemma~\ref{lem: twochild} does not exceed $O(n^2)$
while that number in Lemma~\ref{lem: onechild} is only $O(n)$. Furthermore, if
the input multiset $P$ is a set, i.e., $|P(v)|\le 1$ for all vertices $v$, then
the number of candidate partition sequences Lemmata~\ref{lem: twochild},
\ref{lem: onechild} is $O(n)$ and $2$, respectively. We conclude that we can
compute $mc(v,p,q,s,t)$ for all the corresponding subproblems
$MAXCUT(v,p,q,s,t)$ in a bottom up fashion in order of nondecreasing $|T_v|$ in
$O(n^{5})$ time in the general case and in $O(n^{4})$ time if $P$ is a set.

Now, it is sufficient to find the maximum among the values of the form
$mc(r,p,q,0,0)$ (recall that $r$ is the root of $T$), to obtain the maximum
value of a cut for $P$. If we additionally require $p=q$ then we obtain a
maximum value of bisection etc. By backtracking, we can construct maximum cut,
maximum bisection etc. The minimum variants of the partition problems can be
solved analogously.

If the rooted $T$ is not binary, we can easily transform it to a binary tree
$T'$ by adding totally at most $n-2$ dummy vertices between parents with more
than two children and their respective children. The dummy vertices between a
parent $v$ and its children form a binary subtree rooted at $v$ whose leaves are
the children. We set the weight of the edges incident to the children to the
corresponding weights of the edges between $v$ and their respective children in
$T$. All other edge weights are set to zero in the subtree. The considered
optimal cut and partitions problems for $P$ in $T$ are equivalent to those for
$P$ in $T'$.

\begin{theorem}
Let $T$ be a tree with at most $n$ vertices and nonnegative real edge weights,
and let $P$ be a multiset of vertices in $T$ with cardinality at most $n$. The
MAX-CUT, MAX-BISECTION, $(k,n-k)$ MAX-PARTITION, MIN-BISECTION and $(k,n-k)$
MIN-PARTITION problems for $P$ can be solved by dynamic programming in
$O(n^{5})$ time. If $P$ is a set then these problems can be solved in $O(n^{4})$
time.
\end{theorem}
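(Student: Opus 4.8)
The theorem is essentially a corollary of the two recurrence lemmata together with the complexity bookkeeping that follows them, so my plan is to assemble these pieces into a clean correctness-and-running-time argument. The overall strategy is: (1) reduce to the binary case, (2) establish correctness of the dynamic program via the recurrences, (3) count subproblems and per-subproblem work, and (4) explain how the various partition variants all fall out of the same table.

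First I would dispose of the non-binary case by invoking the transformation described just before the statement: any rooted tree $T$ is converted into a binary tree $T'$ by inserting at most $n-2$ dummy vertices, where the edges to original children retain their weights and all new internal edges get weight zero. Since no elements of $P$ are placed at dummy vertices and shortest-path distances between original vertices are preserved, the cut and partition values are identical in $T$ and $T'$; this is a linear-time preprocessing step that does not affect the asymptotic bound. Hence it suffices to prove the theorem for binary $T$.

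Next I would argue correctness. The key is the interpretation of $mc(v,p,q,s,t)$ as the optimum of the three-part objective in the definition of $MAXCUT(v,p,q,s,t)$, where $s$ and $t$ record how many elements outside $P_v$ lie on each side of the global partition. The base case (the singleton subtree) is solved in $O(1)$, and the two recurrences of Lemma~\ref{lem: twochild} and Lemma~\ref{lem: onechild} express $mc(v,\dots)$ exactly in terms of the children's values; their proofs, already given, verify that every shortest-path fragment contained in $T_v$ is counted exactly once. An easy induction on $|T_v|$ then shows the table is filled correctly. The crucial observation quoted before the lemmata — that only the \emph{number} of outside elements on each side matters, not their placement — is what makes the state $(v,p,q,s,t)$ sufficient, and I would make this the load-bearing point of the correctness argument.

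For the running time I would count as in the discussion following the lemmata: there are $O(n^3)$ subproblems (the parameters $p,q,s,t$ are each $O(n)$ but constrained so that $p+q+s+t$ is bounded, giving $O(n^3)$ reachable states), and evaluating each via Lemma~\ref{lem: twochild} ranges over $O(n^2)$ partition sequences, yielding $O(n^5)$ overall; when $P$ is a set the inner count drops to $O(n)$, giving $O(n^4)$. Finally, I would note that the global optimum for MAX-CUT is $\max_{p,q} mc(r,p,q,0,0)$, that imposing $p=q$ or $p=k$ yields MAX-BISECTION and $(k,n-k)$ MAX-PARTITION, that backtracking recovers the actual partition, and that the minimization variants follow by replacing $\max$ with $\min$ throughout. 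The main obstacle in a fully rigorous write-up is not any single step but pinning down precisely why no shortest-path fragment is double-counted or omitted across the recursive decomposition; once the quoted observation is accepted, everything else is routine bookkeeping.
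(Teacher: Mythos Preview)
Your proposal is correct and follows essentially the same route as the paper: binarize the tree, fill the table $mc(v,p,q,s,t)$ bottom-up using the two recurrence lemmata, count $O(n^3)$ subproblems times $O(n^2)$ (resp.\ $O(n)$) work, and read off each variant from $mc(r,\cdot,\cdot,0,0)$. One small imprecision: the bound of $O(n^3)$ subproblems comes not merely from $p+q+s+t\le n$ (which would give $O(n^4)$ tuples) but from the stronger constraints $p+q=|P_v|$ and $s+t=|P|-|P_v|$, each determined by $v$, leaving only $O(n^2)$ states per vertex.
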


The geometric optimal partition problems for a finite mulitset of points on the
real line can be equivalently formulated as the corresponding partition problems
in the shortest-path metric induced by the line graph whose vertices correspond
to the points in the multiset and whose edges correspond to the minimal
intervals between different points. The edges are weighted with the length of
intervals. Since in the case of a line graph, the reductions of a subproblem to
smaller ones rely solely on Lemma~\ref{lem: onechild}, the asymptotic time
complexity decreases by $n$.

\begin{corollary}\label{cor: max}
The geometric MAX-CUT problem on the real line as well as the geometric
MAX-BISECTION, $(k,n-k)$ MAX-PARTITION, MIN-BISECTION and $(k,n-k)$
MIN-PARTITION problems on the real line are solvable in $O(n^4)$ time. If the
input multiset of points is a set then these problems are solvable in in
$O(n^{3})$ time.
\end{corollary}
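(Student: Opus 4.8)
The plan is to reduce the geometric partition problems on the real line to the corresponding tree-metric partition problems solved by the preceding theorem, and then to exploit the special structure of the underlying tree to shave a factor of $n$ off the running time. Correctness will be inherited from the theorem, so the entire content of the corollary lies in (i) verifying that the reduction is exact and (ii) refining the complexity count.

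First I would construct the line graph $L$ as described in the paragraph above: its vertices are the distinct input points in their natural order along the line, and each edge joins two consecutive points and is weighted by the length of the interval between them. The key verification is that the shortest-path metric induced by $L$ agrees exactly with the Euclidean metric on the line. Because all edge weights are nonnegative and the only path in a line graph between two vertices is the unique monotone one, the shortest-path length between two points equals the sum of the intervening interval weights, which is precisely their distance $|x-y|$; in particular two copies of the same point sit at the same vertex and are at distance zero. A point configuration with multiplicities therefore maps to a multiset $P$ of vertices of $L$ so that the value of any cut or partition of the point set equals the value of the corresponding cut or partition of $P$ in the tree-metric sense. This exhibits the geometric problems as instances of the problems handled by the theorem.

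Next I would analyze the running time by observing that $L$, rooted at one of its two endpoints $r$, is simply a path: every vertex other than the farthest leaf has exactly one child, and the leaf is the singleton base case solvable in $O(1)$ time. Consequently the bottom-up dynamic program never encounters the two-child case of Lemma~\ref{lem: twochild} and relies solely on the single-child recurrence of Lemma~\ref{lem: onechild}. As recorded before the theorem, the number of candidate partition sequences in Lemma~\ref{lem: onechild} is $O(n)$ in the general multiset case and only a constant (namely $2$) when $P$ is a set. Since the total number of subproblems $MAXCUT(v,p,q,s,t)$ is $O(n^3)$, multiplying by the per-subproblem work of $O(n)$ or $O(1)$ gives overall running times of $O(n^4)$ and $O(n^3)$ respectively. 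As in the theorem, the MAX-CUT, MAX-BISECTION, $(k,n-k)$ MAX-PARTITION, MIN-BISECTION and $(k,n-k)$ MIN-PARTITION variants are all obtained by reading off the appropriate extremal value among $mc(r,p,q,0,0)$, imposing the cardinality constraint $p=q$ or $p=k$ where required, and backtracking.

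The step I expect to require the most care is the distance-preservation claim: one must confirm that collapsing coincident points into vertex multiplicities, and placing edges only between consecutive distinct points, yields a line graph whose shortest-path metric reproduces $|x-y|$ for every pair, including pairs of copies of a single point. Once this equivalence is secured, the complexity improvement follows immediately from the absence of branching vertices, since only the cheaper single-child recurrence is ever invoked.
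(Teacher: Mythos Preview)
Your proposal is correct and follows essentially the same approach as the paper: reduce the real-line problems to the tree-metric problems via the line graph, then observe that a path rooted at an endpoint has only single-child vertices so that only Lemma~\ref{lem: onechild} is invoked, saving a factor of $n$ in the running time. The paper states this more tersely in the paragraph preceding the corollary, but your expanded justification of the distance-preservation step and the explicit subproblem-count arithmetic are faithful elaborations of the same argument.
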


\section{Extensions to Graphs of Bounded Treewidth for Minimum Partition Problems}

A natural way of generalizing our method for tree metrics to those induced by
graphs of bounded treewidth is as follows. Construct a tree decomposition $T$ of
the input graph $G$. For a bag $b$ in the decomposition, let $T_b$ be the
subtree of $T$ rooted at $b$, and let $G_b$ stand for the subgraphs $G_b$ of $G$
induced by vertices contained in the bags of $T_b$. One could consider
analogously subproblems with $G_b$, the number of vertices in the first and the
second set of sought partition within $G_b $ and $G\setminus G_b$, respectively,
as the subproblem parameters.

Unfortunately, the aforementioned parameters do not seem sufficient to specify
such a generalized subproblem. Simply, if $b$ is not a singleton (as this is in
the case of trees) then it is not clear which vertex in $b$ on a shortest path
connecting a vertex in $G_b$ with a vertex in $G\setminus G_b$ belonging to a
different set in the sought partition, would be the last one on the path within
$G_b$.

One could try to tackle this problem by introducing $2|b|$ new parameters
specifying for each vertex in $b$ the number of shortest paths connecting
vertices in $G_b$ with those in $G\setminus G_b$ belonging to a different set in
the sought partition, where $v$ is the last vertex in $G_b$ on the shortest paths
before entering $G\setminus G_b$. (A single shortest path for each such pair
would be counted.) Then, one could generalize the recurrences given in
Lemmata~\ref{lem: twochild}, \ref{lem: onechild} but just in case of the minimum
partition problems.

The reason is that taking minimum would naturally force shortest path
connections in the solutions to the subproblems. On the contrary, taking maximum
would rather force longest path connections in these solutions. Thus, in the
latter case, the generalized method would not yield a correct solution to the
posed maximum partition problem.

Let us take a closer look at the generalized recurrences for minimum partition
problems. We may assume w.l.o.g that $T$ is binary~\cite{B96}. Let $b_1$ and
$b_2$ be the child bags of the bag $b$ in the tree decomposition $T$ of $G$. For
each of the aforementioned $2|b|$ parameters of a subproblem associated with
$G_b$, we need to consider possible partitions into at most $|b_1|+|b_2|$
components corresponding to the analogous parameters for the subproblems
associated with $G_{b_1}$ and $G_{b_2}$, respectively. Thus, if the width of the
tree decomposition $T$ is $d$, such a generalized recurrence would involve
$(n^{O(d)})^{O(d)}=n^{O(d^2)}$ optimal solutions to smaller subproblems. It
follows that the total time taken by the generalization to include minimum
partitions in metric spaces induced by shortest paths in $G$ would be
$n^{O(d^2)}$. The cost of the preprocessing, i.e., the construction of the
decomposition tree of $G$ of minimal width is marginal here~\cite{B96}.

\begin{theorem}
Let $G$ be a graph of treewidth $d$ with at most $n$ vertices and nonnegative
real edge weights, and let $P$ be a multiset of vertices in $G$ with cardinality
at most $n$. The MIN-BISECTION and $(k,n-k)$ MIN-PARTITION problems for $P$ can
be solved by dynamic programming in $n^{O(d^2)}$ time.
\end{theorem}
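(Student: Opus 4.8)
The plan is to lift the dynamic programming of Lemmata~\ref{lem: twochild} and~\ref{lem: onechild} from the setting where the separator is a single vertex (a rooted tree) to the setting where the separator is a bag (a tree decomposition), and to argue that this lift is sound precisely because we minimize. First I would fix a binary, indeed nice, tree decomposition $T$ of $G$ of width $d$, which exists and is computable within the stated budget by~\cite{B96}, and root it. For each bag $b$ I set $G_b$ to be the subgraph of $G$ induced by the vertices appearing in the bags of the subtree $T_b$. The decisive structural fact is that $b$ is a separator: every path in $G$ from a vertex of $G_b$ to a vertex of $G\setminus G_b$ meets $b$, and hence has a well defined \emph{last} vertex in $G_b$, which must lie in $b$.

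Next I would define the generalized subproblem. Its state records the two cardinalities $p,q$ of the copies of $P$ inside $G_b$ assigned to the first and second side of the partition, together with the $2|b|$ crossing parameters: for each $v\in b$, the number $x_v$ of pairs realized by a shortest path whose $G_b$-endpoint lies on the first side, whose $G\setminus G_b$-endpoint lies on the second side, and whose last vertex in $G_b$ is $v$; and symmetrically $y_v$. The value $mc(b,p,q,(x_v,y_v)_{v\in b})$ is the minimum, over partitions of the copies of $P$ in $G_b$ consistent with these counts, of the total weight of the $G_b$-fragments of the relevant shortest paths, both those joining two oppositely-assigned elements inside $G_b$ and the initial fragments, up to their last vertex, of the crossing pairs. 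Since $p,q$ and the $2|b|=O(d)$ counts each range over $O(n)$ values, there are $n^{O(d)}$ states per bag.

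The recurrence generalizes Lemma~\ref{lem: twochild}: at a bag $b$ with children $b_1,b_2$ I would split $p,q$ among $b$, $G_{b_1}$ and $G_{b_2}$, and split each of the $2|b|$ crossing counts at $b$ into contributions routed through $b_1$ and through $b_2$, matching them against the crossing parameters of the two child subproblems (a path whose last $G_b$-vertex is $v$ has a last $G_{b_i}$-vertex in $b_i$); the fragment weights of edges incident to $b$ are added exactly as in the five terms of Lemma~\ref{lem: twochild}, and one takes the minimum. Correctness follows by induction on $T_b$: the separator property makes the last-exit vertex well defined, while minimization forces each crossing pair to be accounted for along a genuine shortest path; taking maximum would instead select longest routes, which is why only the minimum problems are handled. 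The transition cost is the number of ways to distribute $O(d)$ counts over the $O(d)$ child parameters with $O(n)$ choices each, namely $\bigl(n^{O(d)}\bigr)^{O(d)}=n^{O(d^2)}$; multiplied by the $n^{O(d)}$ states and the $O(n)$ bags this stays $n^{O(d^2)}$, and reading off $\min_{p,q} mc(r,p,q,\mathbf{0})$ at the root (with $p=q$ for MIN-BISECTION, or $p=k$ for $(k,n-k)$ MIN-PARTITION) yields the answer.

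The step I expect to be the main obstacle is the correctness of the crossing-count bookkeeping at a join: one must show that distributing the $2|b|$ last-exit counts over the child parameters neither double-counts a crossing pair nor loses shortest-path structure, and in particular that the minimum over consistent distributions equals the true sum of shortest-path fragment weights. Handling ties among shortest paths, by fixing a single representative per pair, and verifying that the handoff of a path at the separator from $G_{b_i}$ into $b$ and then out of $G_b$ is scored consistently across the three subproblems, is the technical heart; it is also exactly the point at which the argument depends on minimization rather than maximization.
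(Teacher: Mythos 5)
Your proposal follows essentially the same route as the paper: a (binary) tree decomposition with subproblems parameterized by the two cardinalities inside $G_b$ plus $2|b|$ ``last-exit'' crossing counts indexed by the vertices of the bag, correctness resting on the separator property and on the fact that minimization forces genuine shortest-path connections (which is exactly why the maximum variants are excluded), and the same $(n^{O(d)})^{O(d)}=n^{O(d^2)}$ transition count. The paper's own argument is only a sketch at the same level of detail, so your write-up matches it in both substance and in the identification of the delicate point (the bookkeeping of crossing counts at a join).
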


\section{Final Remarks}

Our dynamic programming method for optimal partitions in tree metrics yields in
particular polynomial time solutions to these problems on the real line. It
remains an open problem whether optimal partitions in geometric metrics of
dimension greater than $1$ admit polynomial time methods  or they turn out to
be inherently hard. At stake is the exact computational status of other
geometric problems for which our knowledge is very limited at the moment.
The exact computation complexity status of the maximum partition problems for shortest-path
metrics in bounded treewidth graphs is also a challenging open problem.

\section{Acknowledgments}

We thank Uri Feige, Ravi Kannan and Christos Levcopoulos for a number of
interesting discussions on the subject of this paper.

\bibliographystyle{abbrv}
\bibliography{bistreex}

\end{document}